\documentclass{article}
\usepackage{graphicx} 
\usepackage{amsmath, amssymb, amsthm}
\usepackage{booktabs}
\usepackage{tikz}
\usepackage{xspace}
\usepackage[utf8]{inputenc}
\usepackage{cite} 
\usepackage{makecell}
\usepackage{xcolor}
\usepackage{siunitx}
\usepackage{hyperref}
\usepackage{algorithm2e}
\usepackage[font=small,labelfont=bf,margin=0.5cm,tableposition=top]{caption}
\usepackage{hhline}
\usepackage{placeins}
\usepackage{subfig}
\newcommand{\G}{\mathcal{G}}
\newcommand{\I}{\mathcal{I}}
\newcommand{\mP}{\mathcal{P}}
\newcommand{\SO}{\mathcal{SO}}

\newcommand{\C}{\mathcal{C}}
\newcommand{\ST}{\text{ST}}
\SetKwFunction{CompleteCut}{Complete Cut}
\SetKwFunction{MaxTrees}{MaxTrees}
\SetKwFunction{MaxFails}{MaxFails}

\DeclareMathOperator{\pop}{pop}

\theoremstyle{plain}

\newtheorem{proposition}{Proposition}

\theoremstyle{definition}

\newtheorem{example}{Example}

\title{Bonsai: A class of effective methods for independent sampling of graph partitions}

\author{
Jeanne Clelland\thanks{University of Colorado Bolder, Jeanne.Clelland@colorado.edu} \\
\and
Kristopher Tapp\thanks{Saint Joseph's University, ktapp@sju.edu}\\
}
\begin{document}
\maketitle

\begin{abstract}
    We develop effective methods for constructing an ensemble of district plans via  \emph{independent} sampling from a reasonable probability distribution on the space of graph partitions.  We compare the performance of our algorithms to that of standard Markov Chain based algorithms in the context of grid graphs and state congressional and legislative maps.  For the case of perfect population balance between districts, we provide an explicit description of the distribution from which our method samples.
\end{abstract}

\section{Introduction}
Recent court cases in several states have argued about whether enacted voting maps are biased by comparing them to ensembles of thousands or millions of random maps.  The algorithms from which these ensembles are built are mostly based on spanning trees.  The ReCom (\cite{DeFord_Duchin_Solomon_2020, Cannon_Duchin_Randall_Rule_2022}) and Forest ReCom (\cite{Autry_Carter_Herschlag_Hunter_Mattingly_2023}) algorithms use Markov chains --  from a starting seed map, they repeatedly make changes to boundaries between pairs of districts.  These Markov chain methods would be slow to construct an ensemble of {\em independently} sampled plans, since this would require---assuming that the chain is ergodic, which is unknown in general---running the chain as long as the mixing time (which is also unknown in general) for each individual sampled plan.  The algorithm introduced in~\cite{McCartan_Imai_2023} uses Sequential Monte Carlo (SMC) methods rather than Markov chains, but is also ineffective at independent sampling of district plans.

The ability to effectively sample plans independently would have several advantages over existing methods, including the following.
\begin{enumerate}
    \item Independent sampling bypasses concerns about the ergodicity and mixing times of chains.  In contrast, there are few rigorous theoretical results regarding the ergodicity or mixing time of Markov chain-based methods such as ReCom, and there are examples for which ReCom is known to be slow mixing~\cite{Charikar_Liu_Liu_Vuong_2023} or \emph{not} ergodic~\cite{Tucker-Foltz_2024}, \cite{Akitaya_Korman_Korten_Souvaine_Tóth_2020},\cite{Cannon_2023}.  
    \item An algorithm that samples independently can take massive advantage of parallelization.
    \item When sampling independently, a smaller ensemble is sufficient because the effective sample size equals the actual sample size.  In contrast, autocorrelation in ReCom and redundancy issues in SMC can cause the effective sample size to be far smaller than the actual sample size, thereby requiring a much larger sample to achieve the desired level of statistical accuracy. 
\end{enumerate}
For a benchmark related to claim (3), note that the authors of~\cite{Clelland_Colgate_DeFord_Malmskog_Sancier-Barbosa_2021} chose their ReCom chain lengths long enough so that, for two independent chains, the expected value of the Kolmogorov-Smirnov distance between the corresponding two distributions (with respect to certain partisan metrics of interest) empirically seemed to be less than $0.01$.  They observe that for truly independent sampling, which our ``one plan at a time'' methods achieve, this stringent condition requires a plan size of at least $15,094$.  However, because of autocorrelation, chain lengths much longer than $15,094$ steps are required to make the \emph{effective} sample size this large.  In particular, for \emph{Reversible} ReCom chains, it is observed in~\cite{Tapp_Proebsting_Ramsay_2025} that chain lengths in the billions are not long enough in certain states.
\section{Set-up}
The typical starting point for redistricting models is a graph $\G$ whose nodes represent precincts (or other atomic units) labeled with population, and whose edges represent adjacency.  We will assume that the population of each node is positive.  A \emph{district plan} with $k$ districts is a partition of the nodes of $\G$ into $k$ sets of approximately equal population such that each of the induced subgraphs $\{D_1, ..., D_k\}$ (called \emph{districts}) is connected.  The ``approximately equal population'' requirement means that each district must be within some tolerance $\epsilon$ of the \emph{ideal population} of a district, which is defined as $\mathcal{I} = \frac{\pop(\G)}{k}$.  That is,
\begin{equation} (1-\epsilon) \I < \pop(D_i) < (1+\epsilon) \I .\label{E:popdev}\end{equation}

A well-studied distribution on the space of district plans is the \emph{spanning tree distribution}, which assigns to each plan $\mP=\{D_1,...,D_k\}$ a probability proportional to the product of the number of spanning trees on the districts.  That is,
$$\text{Prob}(\mP) \sim \prod_{i=1}^k \ST(D_i),$$
where $\ST(D_i)$ denotes the number of spanning trees on $D_i$.  In particular, ReCom empirically seems to sample from a distribution close to this, and its reversible variant introduced in~\cite{Cannon_Duchin_Randall_Rule_2022} is designed to sample exactly from the spanning tree distribution.

\section{Complete Cut}
In this section, we introduce our simplest algorithm for independent sampling, which we call ``Complete Cut."

In the case of perfect balance $(\epsilon=0)$, we call an edge $e$ of a spanning tree $T$ of $\G$ a \emph{valid cut edge} if its removal partitions $T$ into two subtrees, each of whose populations is an integer multiple of the ideal population $\I$.

A spanning tree $T$ of $\G$ is called \emph{completely cuttable} if there exists a set of $k-1$ edges whose removal partitions $\G$ into $k$ districts, each with population $\I$.  It is straightforward to see that $T$ is completely cuttable if and only if it has exactly $k-1$ valid cut edges.

A simple way to sample a district plan is to draw trees until one of them is completely cuttable.

\RestyleAlgo{ruled}
\begin{algorithm}[H]
\caption{Complete Cut}
\label{alg:perfect-balance-0}
\SetKwInOut{Input}{Input}
\SetKwInOut{Output}{Output}

\Input{$\G, k$}
\Output{A connected partition of $\G$ into $k$ districts with exact balance ($\epsilon=0)$}

    \While{true}{
        $T\gets$ a uniform spanning tree of $\G$\;
        Mark every valid cut edge of $T$\;
        \If{there are $k-1$ marked edges (that is, if $T$ is completely cuttable)}{
             Remove the marked edges from $T$\;
             \Return the resulting plan\; 
        }
    }
\end{algorithm}

In the case where $\G$ is the $m$-by-$n$ rectangular grid ($m \geq n$) and $k|m$, it was shown in~\cite{Cannon_Pegden_Tucker-Foltz_2024} that a polynomial fraction $\frac{1}{p(m,n)}$ of the spanning trees of $\G$ are completely cuttable.  According to~\cite{Charikar_Liu_Liu_Vuong_2023} this implies that it is possible to sample from the spanning tree distribution on $\G$ in polynomial time. 

More trivially, this result from ~\cite{Cannon_Pegden_Tucker-Foltz_2024} implies that it is possible in polynomial time to sample from the distribution from which Complete Cut samples.  This distribution is related to the spanning tree distribution and can be described explicitly as follows.

\begin{proposition}    
The probability that Complete Cut selects the partition $\mP = \{D_1,...,D_k\}$ is
$$ \text{Prob}(\mP) \sim \ST(\G/\mP) \cdot \prod_{i=1}^k \ST(D_i),$$
where $\ST(\G/\mP)$ denotes the number of spanning trees on the quotient multi-graph $\G/\mP$, 
in which, for each $i$, all vertices of $D_i$ are identified.
\end{proposition}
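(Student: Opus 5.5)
Since each iteration of the loop draws a uniform spanning tree independently and the algorithm stops at the first completely cuttable tree, the output is distributed as the plan induced by a uniformly random \emph{completely cuttable} spanning tree. So $\text{Prob}(\mP)$ is proportional to the number $N(\mP)$ of completely cuttable spanning trees $T$ of $\G$ whose removal of valid cut edges yields exactly $\mP$. The plan is to show that
$$N(\mP)=\ST(\G/\mP)\cdot\prod_{i=1}^k \ST(D_i),$$
which gives the proposition after normalizing by the total number of completely cuttable trees.

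First I will characterize the trees counted by $N(\mP)$. Fix $\mP=\{D_1,\dots,D_k\}$ with each $\pop(D_i)=\I$. I claim a spanning tree $T$ of $\G$ produces $\mP$ if and only if two conditions hold: for each $i$, the restriction $T\cap D_i$ (edges of $T$ with both endpoints in $D_i$) is a spanning tree of $D_i$, and $T$ has exactly $k-1$ edges joining different districts.

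For the forward direction, removing the $k-1$ marked edges leaves $k$ components, which are the districts, each spanned by a subtree. All other edges of $T$ lie inside districts, so $T\cap D_i$ is a spanning tree of $D_i$.

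For the converse, if $T\cap D_i$ is a spanning tree of each $D_i$, then $T$ has $\sum(|D_i|-1)=|V|-k$ internal edges, hence exactly $k-1$ inter-district edges. I then need each inter-district edge $e$ to be a valid cut edge. Removing $e$ splits $T$ into two subtrees. Each $D_i$ is connected within $T-e$ via $T\cap D_i$, so each subtree is a union of whole districts, and its population is a multiple of $\I$. Thus $T$ has at least $k-1$ valid cut edges, and the observation that a tree has at most $k-1$ valid cut edges gives exactly $k-1$. Removing these edges yields $\mP$.

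Next I will set up the bijection. Contracting each $D_i$ sends the inter-district edges of such a $T$ to $k-1$ edges of the multigraph $\G/\mP$. I will check that these form a spanning tree of $\G/\mP$: they connect all $k$ vertices because $T$ is connected and the districts are internally connected. Conversely, given spanning trees $T_i$ of each $D_i$ and a spanning tree $S$ of $\G/\mP$, I regard each edge of $S$ as a specific edge of $\G$; this is exactly why the multigraph rather than the simple quotient is needed. The union $\bigcup T_i\cup S$ is then connected with $|V|-1$ edges, so it is a spanning tree of $\G$ satisfying the characterization above. These two maps are inverse to each other, which yields the product formula for $N(\mP)$.

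The main care point is the converse step, namely verifying that the inter-district edges are precisely the valid cut edges. This requires that no edge internal to a district is valid, and that is exactly what the count of at most $k-1$ valid cut edges guarantees. I will justify that bound by noting that the valid cut edges of any tree partition it into pieces whose populations are positive multiples of $\I$ summing to $k\I$, so there are at most $k$ pieces. Positivity of node populations is used here to ensure each piece has population at least $\I$.
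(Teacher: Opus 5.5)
Your proof is correct and follows the paper's argument: a completely cuttable tree inducing $\mP$ corresponds bijectively to a choice of spanning tree on each district together with a spanning tree of the multigraph $\G/\mP$. You also supply details the paper leaves implicit, namely the rejection-sampling reduction to a uniform completely cuttable tree and the check that the inter-district edges are exactly the valid cut edges. One claim is asserted rather than shown: that removing all valid cut edges leaves pieces whose populations are multiples of $\I$. It takes one line to justify, for instance by rooting the tree and writing each piece's population as a difference of subtree populations.
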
 
\begin{proof}
    A completely cuttable spanning tree that induces $\mP$ is formed by choosing a spanning tree for each district, plus choosing the connecting edges.  Choosing the connecting edges is equivalent to selecting a spanning tree of $\G/\mP$.
\end{proof}

Despite the polynomial-time results, Complete Cut is too slow in practice to use effectively on large graphs.  We performed computational experiments with grid graphs of various sizes and numbers of districts, in which we used Wilson's algorithm to draw 1,000,000 uniform spanning trees and counted the number of valid cut edges in each tree.  Results are summarized in Table \ref{table:cuttability}.

\begin{table}[bht!]
    \centering
    \renewcommand{\arraystretch}{1.3}
    \caption{For a variety of grid graphs and district partition sizes, we sampled 1,000,000 uniform spanning trees.  Here we show (1) the percentage of trees that were completely cuttable; (2) the maximum number of valid cut edges seen in {\em any} sampled tree; (3) the number of trees which achieved the maximum number of valid cut edges.}
    \resizebox{\textwidth}{!}
    {    
\begin{tabular}{|c|c|c|c|c|}
\hline
Size of  & Number of  & Pct completely  & Max number of & Trees with max \\[-0.05in]
grid graph & districts & cuttable & valid cut edges & valid cut edges \\
\hline
$7 \times 7$ & $7$ & $1.58\%$ & $6$ & $15,749$\\
\hline
$50 \times 50$ & $5$ & $0.013\%$ & $4$ & $134$ \\
\hline
$50 \times 50$ & $10$ & $0\%$ & $8$ & $2$ \\
\hline
$50 \times 50$ & $25$ & $0\%$ & $16$ & $1$ \\
\hline
$50 \times 50$ & $50$ & $0\%$ & $27$ & $2$ \\
\hline
    \end{tabular}
}
    \label{table:cuttability}
\end{table}
    
As Table \ref{table:cuttability} shows, the percentage of completely cuttable trees drops precipitously as the size of the graph increases, even if the number of districts remains small. And as the number of districts increases, the maximum number of valid cut edges in {\em any} sampled tree decreases rapidly relatively to the number required for complete cuttability.

For plans where imperfect population balance is permitted, we might initially hope for better success with this strategy.  In particular, it is shown in ~\cite{Cannon_Pegden_Tucker-Foltz_2024} that for an $m \times n$ grid graph ($m \geq n$) and $k | m$, the fraction of uniform spanning trees that can be partitioned into $k$ districts $D_i$ whose populations satisfy Equation~\ref{E:popdev}
(with $\frac{20}{n} < \epsilon < \frac{1}{3k}$)
is bounded below by a {\em constant}.  Empirical results are obtained on grid graphs of sizes $50 \times 50$ and $100 \times 100$ for $k=2$; in both cases it appears that slightly less than $1\%$ of uniformly sampled trees can be bipartitioned with $\epsilon \leq 0.05$, a typical value in redistricting applications.  But if $k$ is much larger than $2$, then the fraction of trees with $k-1$ valid cut edges is likely to remain impracticably small.  

Moreover, when $\epsilon>0$, it is a subtle problem to efficiently decide whether there exist $k-1$ valid cut edges; an algorithm for this is found in~\cite{BUD}.

\section{Bonsai:\! A practical algorithm for independent sampling}

In this section we develop our algorithm for independent sampling of graph partitions, which we have named ``Bonsai" after the traditional Japanese art of growing and shaping trees.

\subsection{Bonsai for partitions with perfect population balance}
In this subsection, we continue with the case of perfect population balance ($\epsilon=0$), and we improve on Complete Cut with a second algorithm that, instead of waiting for a completely cuttable tree, makes all possible cuts for the given tree, and then repeats the process on the pieces that require further cuts to become single districts.

\RestyleAlgo{ruled}
\begin{algorithm}[H]
\caption{Make all possible cuts simultaneously}
\label{alg:perfect-balance-1}

\SetKwInOut{Input}{Input}
\SetKwInOut{Output}{Output}

\Input{$\G, k$}
\Output{A connected partition of $\G$ into $k$ districts with exact population balance ($\epsilon = 0$)}

\SetKwFunction{GenPlan}{GeneratePlan}
\SetKwProg{Fn}{Function}{:}{}

\BlankLine
$\mathcal{I} \gets \frac{\pop(\G)}{k}$

\Return $\GenPlan(\G)$\;

\BlankLine
\Fn{\GenPlan{$H$}}{
    \If{$\pop(H)=\mathcal{I}$}{
        \Return $\{H\}$\;
    }

    \While{true}{
        $T\gets$ a uniform spanning tree of $H$\;

        Mark every cut edge of $T$ that is valid (with respect to $\mathcal{I})$;

        \If{no edges are marked}{
            continue\;
        }

        Remove all marked edges from $T$\;
        Let $\{H_1, \dots, H_l\}$ be the resulting subgraphs\;
        \Return $\GenPlan(H_1)\cup\cdots\cup \GenPlan(H_l)$\;
    }
}
\end{algorithm}

In Appendix~\ref{S:A1}, we will derive a formula for the distribution from which Algorithm~\ref{alg:perfect-balance-1} samples.

The following is an equivalent formulation of Algorithm~\ref{alg:perfect-balance-1} that is phrased in a way that will more naturally generalize to the case of imperfect population balance in the next section.  Instead of simultaneously making all possible cuts from the tree $T$, it only removes a single balanced cut edge.  But it keeps the resulting subtrees on the two pieces to use as the first trees it tries when it further subdivides those pieces.  Therefore, just like before, it eventually makes all possible cuts from $T$ before drawing any new trees.  

\RestyleAlgo{ruled}
\begin{algorithm}[H]
\caption{Bonsai with exact population balance}
\label{alg:perfect-balance-2}

\SetKwInOut{Input}{Input}
\SetKwInOut{Output}{Output}

\Input{$\G, k$}
\Output{A connected partition of $\G$ into $k$ districts with exact population balance ($\epsilon = 0$)}

\SetKwFunction{GenPlan}{GeneratePlan}
\SetKwProg{Fn}{Function}{:}{}

\BlankLine
$\mathcal{I} \gets \frac{\pop(\G)}{k}$

$T \gets$ a uniform spanning tree of $\G$\;
\Return $\GenPlan(\G, T)$\;

\BlankLine

\Fn{\GenPlan{$H, T$}}{
    \If{$\pop(H)=\mathcal{I}$}{
        \Return $\{H\}$\;
    }

    \While{true}{
        Mark every cut edge of $T$ that is valid (with respect to $\mathcal{I})$\;

        \If{no edges are marked}{
            $T \gets$ a new uniform spanning tree of $H$\;
            continue\;
        }

        Remove the ``best'' marked edge from $T$\;
        Let $\{H_1, H_2\}$ be the resulting subgraphs and
        $\{T_1, T_2\}$ the induced spanning trees\;

        \Return
        $\GenPlan(H_1, T_1)\cup\GenPlan(H_2,T_2)$\;
    }
}
\end{algorithm}

In the case $\epsilon=0$, it does not matter how we define the ``best'' marked edge.  We could make a random selection, or use any other strategy, and the algorithm will still be equivalent to Algorithm~\ref{alg:perfect-balance-1}.

Before addressing the $\epsilon>0$ case, for which it will matter how we define the ``best'' marked edge, it is useful to first add a backtracking feature to our algorithm to reduce the likelihood that it becomes stuck.  This feature works exactly the same for Algorithm~\ref{alg:perfect-balance-1} and Algorithm~\ref{alg:perfect-balance-2}.  It relies on two parameters.  First, \MaxTrees is the maximum number of spanning trees of a graph that the algorithm is willing to draw before backtracking.  Second, \MaxFails is the maximum number of failed attempts at completely partitioning a graph before the algorithm backtracks.  For example, if a $10$-district-size piece is split into a $7$-district-size piece and a $3$-district-size piece, but it then fails to find any valid cut edges for the $3$-district-size piece after \MaxTrees attempted trees, then it backtracks to the step of re-dividing the original $10$-district-size piece.  If this happens \MaxFails times (backtracking to the step of re-dividing the $10$-district-size piece because of downstream failures), then the algorithm further backtracks to the step of re-splitting the larger piece from which the $10$-district-size piece was cut.

Empirically, we found that $\MaxTrees=10$ and $\MaxFails=3$ worked reasonably well for real world problems, although the optimal choices can depend on the problem.

Why did we choose in Algorithm~\ref{alg:perfect-balance-2}, after making a cut, to keep the trees on each piece and use them for the initial attempt to further subdivide each piece?  This decision gives a moderate speed improvement (by avoiding drawing new trees unnecessarily).  More importantly, this is what makes Algorithm~\ref{alg:perfect-balance-2} equivalent to Algorithm~\ref{alg:perfect-balance-1} in the case $\epsilon=0$, and we have an exact description of the sampling distribution for Algorithm~\ref{alg:perfect-balance-1}.  On the other hand, if the algorithm were to draw new trees on the two pieces after a cut, we do not believe this change would have an empirically noticeable effect on the distribution from which it samples in real world problems.  

\begin{example}\label{bonsai-example-1}
Suppose that we want to divide a $6 \times 6$ grid graph into $6$ districts, each of size $6$.  We start by sampling a uniform spanning tree; suppose that we sample the tree shown in Figure \ref{fig:bonsai}(a).  This tree has 3 valid cut edges, shown in red in Figure \ref{fig:bonsai}(b).  Cutting these 3 edges partitions the graph into 4 pieces of sizes $6, 6, 12, 12$, as shown in Figure \ref{fig:bonsai}(c).  We then sample new trees on each of the pieces of size 12, as in Figure \ref{fig:bonsai}(d).  If these trees have valid cut edges as in Figure \ref{fig:bonsai}(e), we remove these edges to complete the district plan, as in Figure \ref{fig:bonsai}(f).

\begin{figure}[bht!]
\begin{center}
\caption{Bonsai algorithm for partitioning a $6 \times 6$ grid graph into $6$ equal-size districts: (a) uniformly sampled spanning tree; (b) valid cut edges on spanning tree; (c) partial partition of graph; (d) uniformly sampled spanning trees on double-district-sized-pieces; (e) valid cut edges on spanning trees; (f) completed partition into districts.}
\label{fig:bonsai}
\subfloat[]{\includegraphics[height=1.5in]{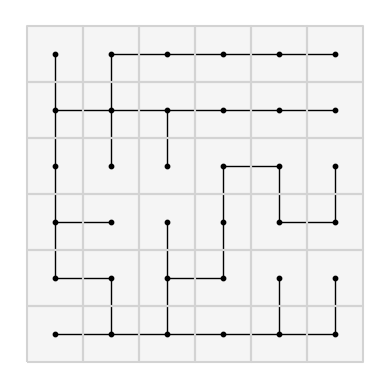} }
\ 
\subfloat[]{\includegraphics[height=1.5in]{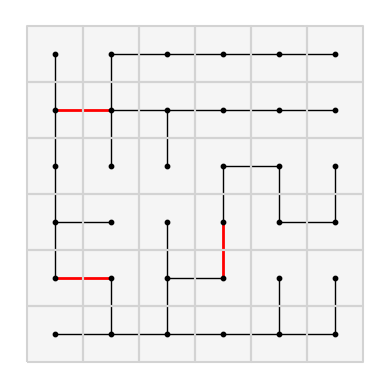}  }
\ 
\subfloat[]{\includegraphics[height=1.5in]{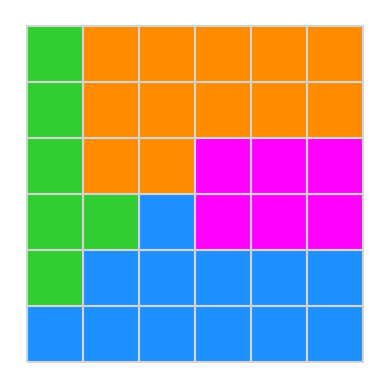}  }
\ 
\subfloat[]{\includegraphics[height=1.5in]{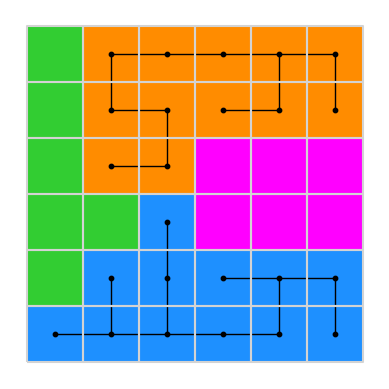}  }
\ 
\subfloat[]{\includegraphics[height=1.5in]{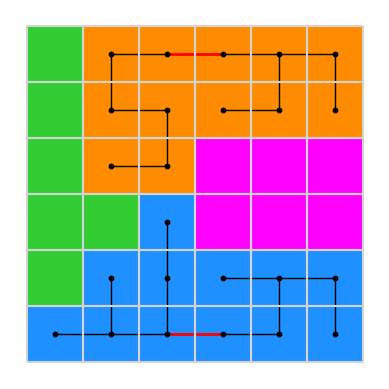} }
\ 
\subfloat[]{\includegraphics[height=1.5in]{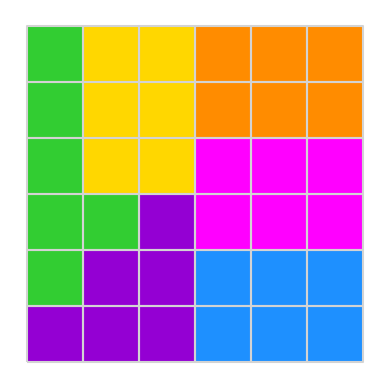} }
\end{center}
\end{figure}

\end{example}

\begin{example}\label{bonsai-example-2}

With the same scenario as in Example \ref{bonsai-example-1}, suppose that during the process of partitioning the graph into pieces, a piece is created that cannot be partitioned into equal-size pieces, as in Figure \ref{fig:bad-bonsai}(a).  Then the backtracking provisions eventually undo the previous cut that created this piece, as in Figure \ref{fig:bad-bonsai}(b); then a new tree is drawn on the merged piece and the algorithm proceeds as before, as shown in Figure \ref{fig:bad-bonsai}(c-e).

\begin{figure}[bht!]
\begin{center}
\caption{When Bonsai gets stuck: (a) a partial partition that contains an uncuttable piece; (b) uncuttable piece merged with piece from prior cut; (c) new uniformly sampled spanning tree on merged piece; (d) valid cut edges on new spanning tree; (e)  completed partition into districts.}
\label{fig:bad-bonsai}
\subfloat[]{\includegraphics[height=1.5in]{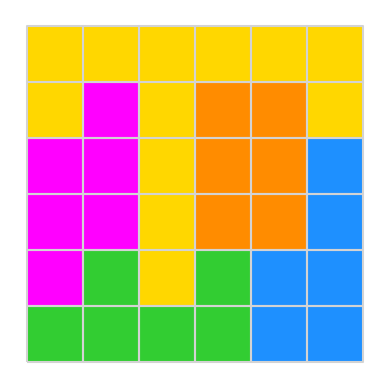} }
\ 
\subfloat[]{\includegraphics[height=1.5in]{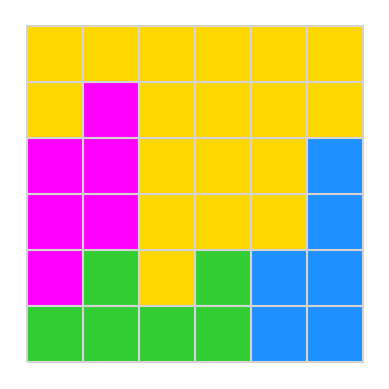}  }
\ 
\subfloat[]{\includegraphics[height=1.5in]{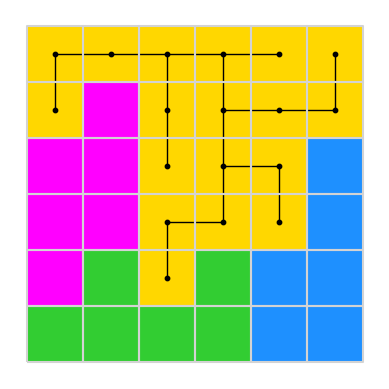}  }
\ 
\subfloat[]{\includegraphics[height=1.5in]{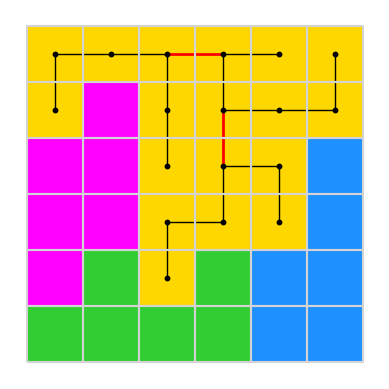}  }
\ 
\subfloat[]{\includegraphics[height=1.5in]{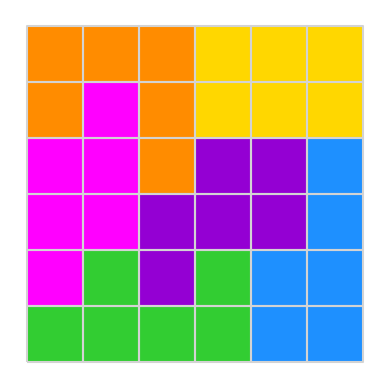} }
\end{center}
\end{figure}

\end{example}

\FloatBarrier

\subsection{Bonsai for partitions with imperfect population balance}
In the case $\epsilon=0$, Algorithm~\ref{alg:perfect-balance-2} is equivalent to Algorithm~\ref{alg:perfect-balance-1}, but it has the advantage of generalizing to the case of imperfect population balance ($\epsilon>0$) with only minor modifications to the {\tt GeneratePlan} function.  For this, we need to define the notions of ``valid cut edge'' and ``best'' in this setting, which is the goal of this section.

Let $H$ be a graph that is to be partitioned into $k$ districts. Let $T$ be a spanning tree of $H$, and suppose that the removal of an edge $e$ of $T$ partitions $H$ into two subgraphs $H_1$ and $H_2$, which are to be further partitioned into $k_1$ and $k_2$ districts, respectively, with  
$k_1+k_2=k$. 
For each $i \in \{1,2\}$, let 
\begin{equation} 
\delta_i = \frac{|\mathrm{pop}(H_i)-k_i\cdot\I|}{\mathcal{I}}.\label{define-delta}
\end{equation} 
In order for $e$ to be a valid cut edge, we need $\delta_i$ to be sufficiently small so that $H_i$ is right-sized to be further broken down into $k_i$ districts.

The loosest possible requirement is $\delta_i\leq k_i\cdot\epsilon$.  However, in real world problems this strategy is likely to get the algorithm stuck.  If $\delta_i$ is very close to $k_i\cdot\epsilon$, then $H_i$ is barely able to be subdivided into $k_i$ districts, and it might be very unlikely or impossible for the algorithm to succeed at further breaking down $H_i$.  A much stricter requirement is $\delta_i\leq\epsilon$, which is more likely to leave enough population slack in each piece to allow it to be further broken down.  

We will interpolate between the loose and strict requirements via an arbitrary ``tolerance multiplier function'' $\phi:\mathbb{N}\rightarrow [1,\infty)$, by defining $e$ to be a \emph{valid cut edge} with respect to $\phi$ and the pair $(k_1, k_2)$ if 
$$ k_1 + k_2 = k,\,\text{ and }\,\delta_i\leq \phi(k_i)\cdot\epsilon\, \text{ for each }i\in\{1,2\}.$$
Note that $\phi(n)=n$ and $\phi(n)=1$ correspond to the above-mentioned loose and strict requirements respectively; a general tolerance multiplier must lie between these extremes:
$$ 1 \leq \phi(n) \leq n \text{ for all }n\geq 1.$$
We note that if $\epsilon$ and $\phi$ are large enough, it is possible that a cut edge $e$ may be valid with respect to more than one pair $(k_1, k_2)$.


With this definition of \emph{valid cut edge}, and once we choose a method of selecting the ``best'' valid cut edge and corresponding pair $(k_1,k_2)$, the only modification required to Algorithm~\ref{alg:perfect-balance-2} (including the backtracking feature) in the $\epsilon>0$ setting is to keep track of the number of districts $k$ that each graph $H$ is to be partitioned into.  This is the algorithm that we use for the empirical results in the remainder of the paper.  

\RestyleAlgo{ruled}
\begin{algorithm}[H]
\caption{Bonsai with imperfect population balance}
\label{alg:imperfect-balance-3}

\SetKwInOut{Input}{Input}
\SetKwInOut{Output}{Output}

\Input{$\G, k, \epsilon, \phi$}
\Output{A connected partition of $\G$ into $k$ districts $H_i$ whose populations satisfy Equation~\ref{E:popdev}.
} 

\SetKwFunction{GenPlan}{GeneratePlan}
\SetKwProg{Fn}{Function}{:}{}

\BlankLine
$\mathcal{I} \gets \frac{\pop(\G)}{k}$

$T \gets$ a uniform spanning tree of $\G$\;
\Return $\GenPlan(\G, T, k, \epsilon, \phi)$\;

\BlankLine

\Fn{\GenPlan{$H, T, k, \epsilon, \phi$}}{
    \If{$k=1$}{
        \Return $\{H\}$\;
    }

    \While{true}{
        Record every triple $(e,k_1,k_2)$ for which $e$ is a cut edge of $T$ that is valid with respect to $\mathcal{I},k_1,k_2, \phi$\;

        \If{no triples are recorded}{
            $T \gets$ a new uniform spanning tree of $H$\;
            continue\;
        }

        Choose the ``best'' triple $(e,k_1,k_2)$, and remove $e$ from $T$\;
        Let $\{H_1, H_2\}$ be the resulting subgraphs and
        $\{T_1, T_2\}$ the induced spanning trees\;

        \Return
        $\GenPlan(H_1, T_1, k_1, \epsilon, \phi)\cup\GenPlan(H_2,T_2, k_2, \epsilon, \phi)$\;
    }
}
\end{algorithm}

For the empirical results in the next section, we use $\phi=1$ (the constant function), and our method of choosing the ``best'' triple $(e,k_1,k_2)$ is to select the one that is most balanced in the following sense. We first identify the triples that minimize $|k_1-k_2|$, and then among those, choose the one that minimizes $\max\{\delta_1, \delta_2\}$, breaking ties randomly.

When $\phi=1$ and $\epsilon<.25$, which will be the case in all of our experiments, it is worth noting that Algorithm~\ref{alg:imperfect-balance-3} can be simplified because for each cut edge $e$, there is only a single valid choice of $k_i$ for each $i\in\{1,2\}$; namely, $k_i=\lfloor\frac{\pop(H_i)}{\mathcal{I}}\rceil$, where $\lfloor\cdot\rceil$ denotes rounding to the nearest integer.  It therefore isn't necessary to pass the argument $k$ to the function \GenPlan, since $k$ can be determined from the population of the graph.   In other words, with these settings, Bonsai is guaranteed to produce a plan with the correct number of districts even without the guardrails that are designed to ensure this.

Although we'll see in the next section that these settings work very effectively at partitioning grid graphs and precinct graphs, the following toy example shows that these settings do not always work.

\begin{example}
Figure~\ref{F:brainstorm} shows a graph that is a path with $20$ vertices, each with population $9$ or $11$, to be partitioned into $k=20$ districts with tolerance $\epsilon=0.1$.  The only valid plan is the one for which each vertex is a district. The edge labeled $e_2$ has a population of $101$ to its left and $99$ to its right, so the triple $(e=e_2,k_1=10, k_2=10)$ would be chosen by our system of identifying the ``best'' triple as the most balanced one; however, this choice makes it impossible to further partition the left and right pieces.  The triple $(e=e_1,k_1=10, k_2=10)$ would lead to a successful partition, but is less balanced and is only valid with respect to the looser tolerance multiplier function $\phi(n)=n$.   
\end{example}

\begin{figure}[bht!]
\centering
\includegraphics[width=2.3in]{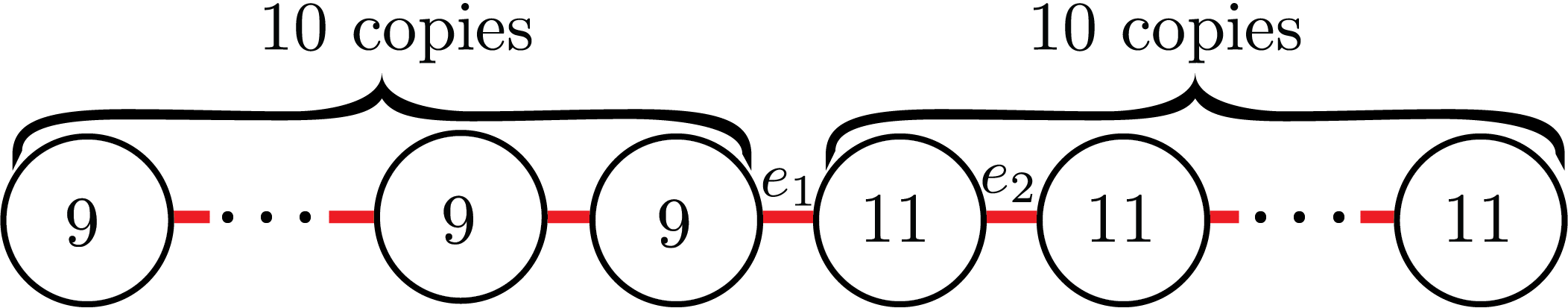}
\caption{A toy example for which Algorithm~\ref{alg:imperfect-balance-3} fails when ``best'' is defined as ``most balanced''.}
\label{F:brainstorm}
\end{figure}

We framed Algorithm~\ref{alg:imperfect-balance-3} in terms of a general tolerance multiplier function $\phi$ and a general definition of ``best,'' not just to better handle toy examples, but also to achieve the following theoretical guarantee.
\begin{proposition}\label{P:snow}
    If $\phi$ is the identify function ($\phi(n)=n$) and the ``best'' triple is defined by a probabilistic method with full support (that is, each valid triple has a non-zero chance of being chosen), then any valid plan has a non-zero probability of being generated by Algorithm~\ref{alg:imperfect-balance-3}. 
\end{proposition}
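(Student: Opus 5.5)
Fix a valid plan $\mP=\{D_1,\dots,D_k\}$, so each $D_i$ is connected and satisfies Equation~\ref{E:popdev}. I will exhibit one finite run of Algorithm~\ref{alg:imperfect-balance-3} that outputs $\mP$ and check that every random choice along it has positive probability. Since the run is finite, the probability of $\mP$ is at least the product of finitely many positive numbers. The plan is to take a single spanning tree $T$ of $\G$ that induces $\mP$, built exactly as in the proof of the Proposition on Complete Cut. Choose a spanning tree of each $D_i$, then pick connecting edges forming a spanning tree of the quotient multigraph $\G/\mP$; this multigraph is connected because $\G$ is. The first uniform draw equals $T$ with probability $1/\ST(\G)>0$. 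Along the run no further trees are drawn, because the induced subtrees $T_1,T_2$ are always reused.

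The key step is an induction on the number of districts, stated as follows. Let $H$ be a union of $j$ districts of $\mP$, and let $T_H$ be the restriction of $T$ to $H$. Assume $T_H$ is a spanning tree of $H$ whose restriction to each $D_i\subseteq H$ is a spanning tree of $D_i$, with the remaining $j-1$ edges joining the districts in a tree pattern. Then \GenPlan$(H,T_H,j,\epsilon,\phi)$ returns exactly those $j$ districts with positive probability. For $j=1$ this is immediate. For $j\ge 2$, pick any connecting edge $e$ of $T_H$ between districts. Removing $e$ splits $H$ into $H_1,H_2$, which are unions of $j_1,j_2$ districts with $j_1+j_2=j$, and whose induced trees again have the stated structure. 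So the induction hypothesis applies to both pieces once $(e,j_1,j_2)$ is chosen.

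The main point is to check that $(e,j_1,j_2)$ is a valid triple when $\phi(n)=n$; this is where the loose tolerance is needed. By Equation~\ref{E:popdev}, each district satisfies $|\pop(D_i)-\I|<\epsilon\I$. Summing over the $j_1$ districts in $H_1$ and applying the triangle inequality gives $|\pop(H_1)-j_1\I|<j_1\epsilon\I$, that is, $\delta_1< j_1\epsilon=\phi(j_1)\epsilon$, and likewise for $H_2$. Hence the triple is recorded, so the set of recorded triples is nonempty and no new tree is drawn. By the full-support hypothesis on the choice of the ``best'' triple, this triple is selected with positive probability.

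Multiplying these finitely many positive probabilities (one per cut, $k-1$ in total, together with the initial tree draw) gives a positive probability of outputting $\mP$. If backtracking is present, I would note that the run we constructed never triggers it, since no piece ever fails. I expect no serious obstacle; the care is mainly in stating the inductive invariant so that the reused subtrees keep inducing the remaining districts, and in remarking that a cut edge valid for several pairs $(k_1,k_2)$ causes no harm.
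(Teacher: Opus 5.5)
Your proof is correct and follows essentially the paper's route: a tree made of spanning trees of the districts, joined by edges forming a spanning tree of the district-level quotient graph; each connecting edge gives a valid triple because $\phi(n)=n$ lets the per-district deviations add up, and full support makes every cut occur with positive probability. Your single global tree and your explicit check $\delta_1<j_1\epsilon$ make precise two points the paper leaves implicit: that the trees stay consistent across steps (needed since Algorithm~\ref{alg:imperfect-balance-3} reuses induced subtrees rather than redrawing), and the validity inequality itself.
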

In fact, the proof will show that any valid plan has a non-zero probability of being generated \emph{in any order}.
\begin{proof}
Let $P=\{D_1,...,D_k\}$ be a district plan.  Choose any ``splitting order''; that is, any sequence of successive partitions (refinements) of $D_1\cup\cdots \cup D_k$ down to singletons $\{D_1,...,D_k\}$ such that all subsets at all steps are \emph{connected} unions of districts (called multi-districts).  At each step, a multi-district $H=D_{i_1}\cup \cdots \cup D_{i_l}$ is split into two smaller multi-districts $\{H_1 = D_{i_1}\cup\cdots \cup D_{i_a}, H_2 = D_{i_{a+1}}\cup\cdots\cup D_{i_l}\}$.  

To see that splitting orders can always be found, consider the district-level quotient graph $\mathcal{D}$, obtained from $\G$ by identifying all the nodes of each district.  Any spanning tree of $\mathcal{D}$ contains $k-1$ edges, and removing those edges in any order yields a splitting order.

We claim that Bonsai has a nonzero probability of yielding each step of the splitting order.  To see that it can achieve a step that splits $H$ (with $k$ districts) into $\{H_1,H_2\}$ (with $k_1,k_2$ districts respectively), let $T_1,T_2$ be spanning trees of $H_1$ and $H_2$ respectively, and let $T$ be a spanning tree of $H$ obtained from $T_1\cup T_2$ by adding an edge $e$ of $\G$ that connects a vertex of a district of $H_1$ to a vertex of a district of  $H_2$.  Since $\phi$ is the identify function, $e$ is a valid cut edge with respect to $(k_1,k_2)$, and its selection will result in this step of the splitting order.
\end{proof}

This proposition represents an advantage of Bonsai over Markov chain samplers, for which one typically lacks the irreducibility theorems that would be needed in order to guarantee that each valid plan has a nonzero chance of occurring in the sample.

\subsection{Variations}
As described above, the Bonsai algorithm can be varied by changing the tolerance multiplier function $\phi$ and/or the algorithm for choosing the ``best'' triple.  

Other variations can be obtained by choosing different methods for drawing random spanning trees.  For instance, uniform spanning trees might be replaced with minimum spanning trees generated by Kruskal's algorithm, as is common in many implementations of ReCom.
Minimum spanning trees are faster to compute and have the advantage that edge weights can be chosen to reflect some kinds of redistricting priorities; for instance, upweighting edges that connect units in different counties has the effect of reducing the number of counties split between districts in the resulting district plans.

\section{Empirical results}

For our empirical study, we compared two variations of Bonsai (uniform spanning trees and minimum spanning trees, both with tolerance multiplier function $\phi=1$ where applicable) with four variations of ReCom. In addition to the choice of uniform or minimum spanning trees, ReCom offers two options for how to choose a district pair to merge and re-split at each step: either choose a cut edge unformly at random and choose the districts connected by that edge, or choose a district pair uniformly at random.  This gives rise to four ReCom variants:
\begin{itemize}
    \item ReCom A: minimum spanning trees, cut edge selection;
    \item ReCom B: minimum spanning trees, district pair selection;    
    \item ReCom C: uniform spanning trees, cut edge selection;
    \item ReCom D: uniform spanning trees, district pair selection.    
    \end{itemize}

(We did not consider the reversible variant of ReCom, which is compared to the other ReCom variants in detail in Appendix D of \cite{Cannon_Duchin_Randall_Rule_2022}.)

We chose the following graphs/district partition sizes:
\begin{itemize}
    \item $7 \times 7$ grid graph into $7$ equal-size districts;
    \item $50 \times 50$ grid graph into $10$, $25$ and $50$ equal-size districts;
    \item Pennsylvania 2010 VTD graph into 18 Congressional districts (as PA had in the 2010 census cycle) with maximum population deviation  $\epsilon = 0.01$;
    \item North Carolina 2010 VTD graph into 99 state House districts with maximum population deviation  $\epsilon = 0.05$.   
\end{itemize}
Our choices of grid graph partitions are the same as those studied in Appendix D of \cite{Cannon_Duchin_Randall_Rule_2022}; our choices of state VTD graphs represent real-world district plans that were the subject of major litigation in recent years.

For each of these scenarios, we generated 6 ensembles of 100,000 plans each using the two variants of Bonsai and the four variants of ReCom described above.  For the $7\times 7$ grid graph, we constructed an additional ensemble of 100,000 plans using Complete Cut.  We computed the following compactness statistics for each ensemble:
\begin{itemize}
    \item (plan-wide) cut edges;
    \item individual district perimeters (grid graphs only);
    \item  Ordered district vote share distributions (VTD graphs only) with respect to the 2016 Presidential election in Pennsylvania and the 2016 U.S. Senate election in North Carolina.
\end{itemize}

\subsection{$7 \times 7$ grid into $7$ districts}

Figure \ref{fig:7x7-into-7-cut-edges-complete-cut} shows histograms comparing the ensemble statistics for plan-wide cut edges for both variants of Bonsai to those for the analogous versions of Complete Cut for $7$-district plans on a $7 \times 7$ grid.  In both cases, we see that cut edge statistics for Bonsai and Complete Cut are similar, with Bonsai having a slightly higher ensemble average (corresponding to slightly less compact districts) than Complete Cut.  As we might expect, the minimum spanning tree versions produce slightly more compact districts than the uniform spanning tree versions.

\begin{figure}[bht!]
\begin{center}
\caption{Ensemble statistics for plan-wide cut edges for Bonsai vs. Complete Cut, (a) using minimum spanning trees, (b) using uniform spanning trees, for $7$-district plans on a $7 \times 7$ grid}
\label{fig:7x7-into-7-cut-edges-complete-cut}
\subfloat[]{\includegraphics[width=2in]{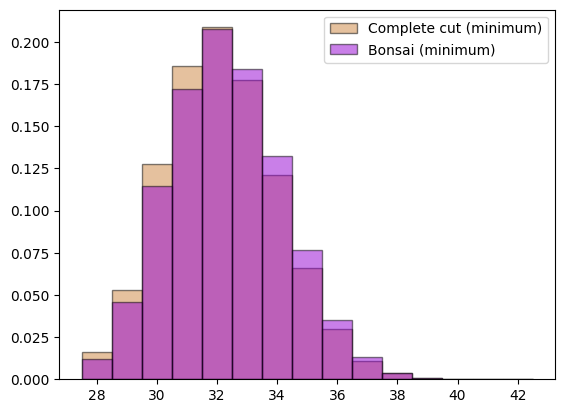} }
\ 
\subfloat[]{\includegraphics[width=2in]{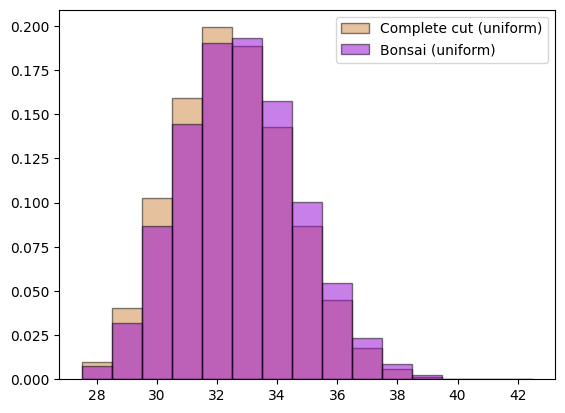}  }
\end{center}
\end{figure}

Figure \ref{fig:7x7-into-7-cut-edges-recom} shows similar histograms comparing the ensemble statistics for plan-wide cut edges for the minimum spanning tree variant of Bonsai to the minimum spanning tree variants of ReCom (ReCom A and ReCom B), and for the uniform spanning tree variant of Bonsai to the uniform spanning tree variants of ReCom (ReCom C and ReCom D), for $7$-district plans on a $7 \times 7$ grid.
For both types of spanning trees, we see that cut edge statistics for Bonsai lie somewhere between those for the two variants of ReCom, and closer to the statistics for the district pair selection variant (ReCom B and ReCom D, respectively).

\begin{figure}[bht!]
\begin{center}
\caption{Ensemble statistics for plan-wide cut edges for Bonsai vs. ReCom variants, (a) using minimum spanning trees, (b) using uniform spanning trees, for $7$-district plans on a $7 \times 7$ grid}\label{F:7_cut_edges}
\label{fig:7x7-into-7-cut-edges-recom}
\subfloat[]{\includegraphics[width=2in]{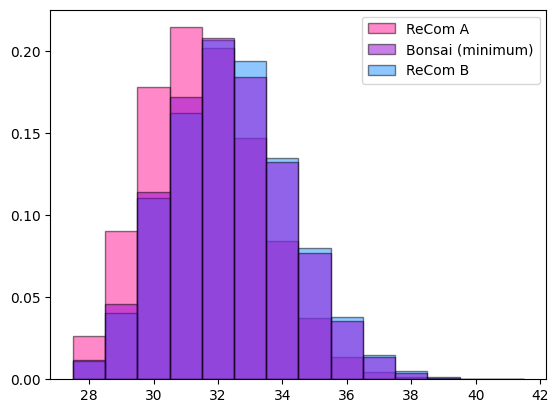}  }
\
\subfloat[]{\includegraphics[width=2in]{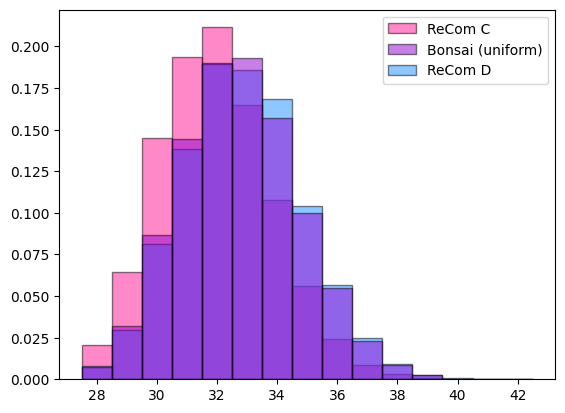}  }
\end{center}
\end{figure}

For $7$-district plans on a $7 \times 7$ grid, individual district perimeters can only take on three possible values: $12$, $14$, and $16$.  Figure 
\ref{fig:7x7-into-7-perims} shows histograms comparing the ensemble statistics for (a) all four minimum spanning tree ensembles, and (b) all four uniform spanning tree ensembles, for the perimeters of all districts in each ensemble; that is, each histogram illustrates the distribution of the set of perimeters of all of the districts in all of the maps of the ensemble.  These district-level perimeter statistics agree with the plan-wide cut edges statistics in the sense that for both types of spanning trees, district perimeter statistics for Bonsai lie somewhere between those for the two variants of ReCom, and closer to the statistics for the district pair selection variant (ReCom B and ReCom D, respectively).

\begin{figure}[bht!]
\begin{center}
\caption{Ensemble statistics for district perimeters for Bonsai vs. Complete Cut and ReCom variants, (a) using minimum spanning trees, (b) using uniform spanning trees, for $7$-district plans on a $7 \times 7$ grid}\label{F:7_ditrict_perimeters}
\label{fig:7x7-into-7-perims}
\subfloat[]
{\includegraphics[width=2in]{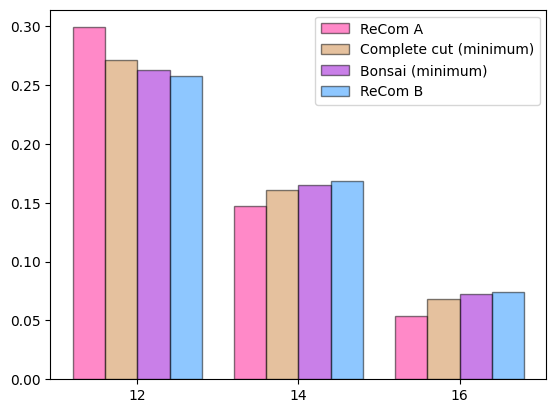} }
\ 
\subfloat[]
{\includegraphics[width=2in]{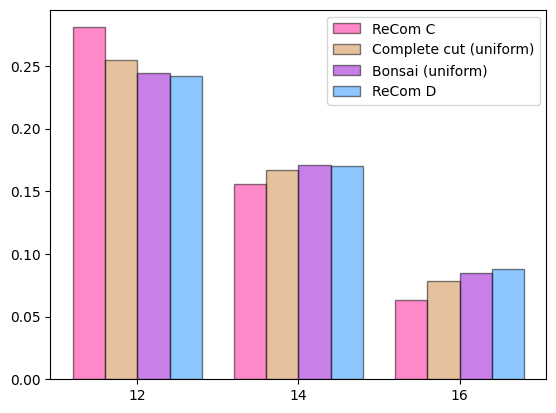}  }
\end{center}
\end{figure}

\FloatBarrier

\subsection{The $50 \times 50$ grid}

We next illustrate ensemble statistics for plan-wide cut edges (Figure~\ref{F:cut_edges_combined}) and district perimeters (Figure~\ref{F:district_perimeters_combined}) for $k$-district plans on the $50$-by-$50$ grid with $k\in\{10, 25, 50\}$.  These plots match Figures~\ref{F:7_cut_edges} and~\ref{F:7_ditrict_perimeters} respectively, except that Complete Cut is omitted from the plots.  In particular, as in the previous figures, the plots on the left compare the minimum spanning tree variant of Bonsai to the minimum spanning tree variants of ReCom (ReCom A and ReCom B), while the plots on the right compare the uniform spanning tree variant of Bonsai to the uniform spanning tree variants of ReCom (ReCom C and ReCom D).  

As in the $7$-by-$7$ case, we see that for both types of spanning trees and for both metrics, Bonsai lies somewhere between the two variants of ReCom, and closer to the district pair selection variant (ReCom B and ReCom D, respectively).

\begin{figure}[hbt!]
    \centering
    \caption{Ensemble statistics for plan-wide cut edges for Bonsai vs. ReCom variants for $k$-district plans on a $50 \times 50$ grid with $k\in\{10, 25, 50\}$}
\begin{tikzpicture}

\node at (0,10.1) {{\footnotesize Minimum spanning trees}};

\node at (5.5,10.1) {{\footnotesize Uniform spanning trees}};

\node at (-3.2,8) {{\scriptsize $k=10$}};

\node at (9,8) {$\ \ \ \ \ \ $};

\node at (0,8) {\includegraphics[width=2in]{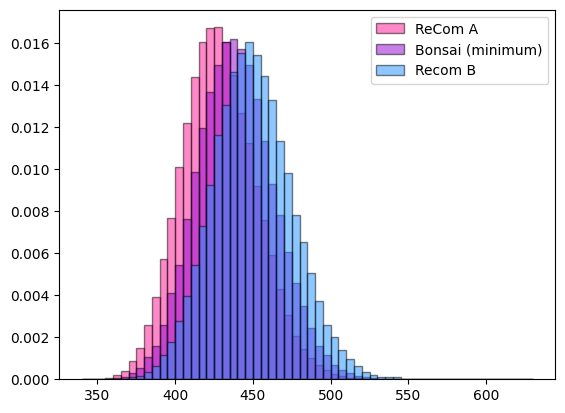}};

\node at (5.5,8) {\includegraphics[width=2in]{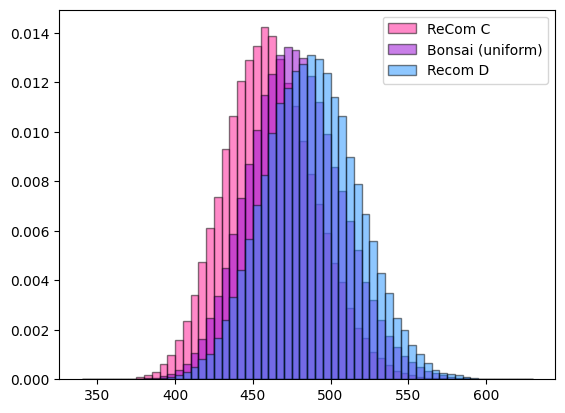}};

\node at (-3.2,4) {{\scriptsize $k=25$}};

\node at (9,4) {$\ \ \ \ \ \ $};

\node at (0,4) {\includegraphics[width=2in]{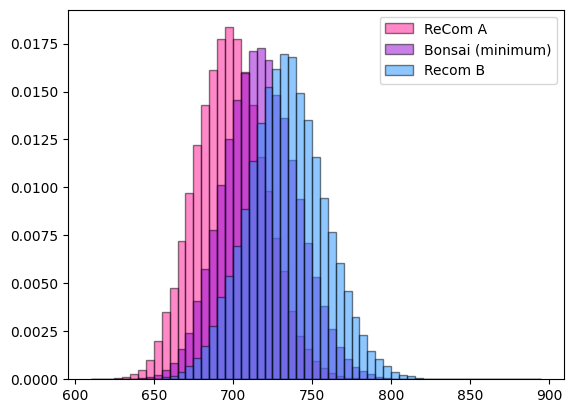}};

\node at (5.5,4) {\includegraphics[width=2in]{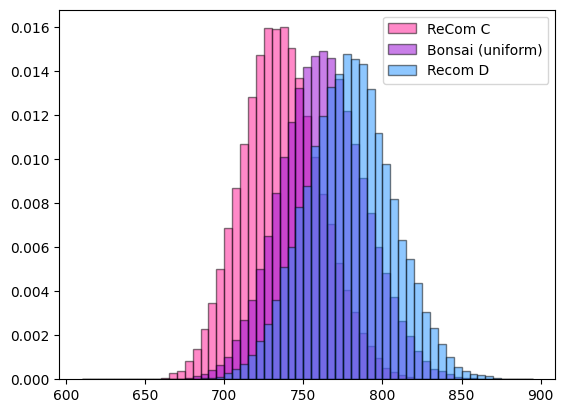}};

\node at (-3.2,0) {{\scriptsize $k=50$}};

\node at (9,0) {$\ \ \ \ \ \ $};

\node at (0,0) {\includegraphics[width=2in]{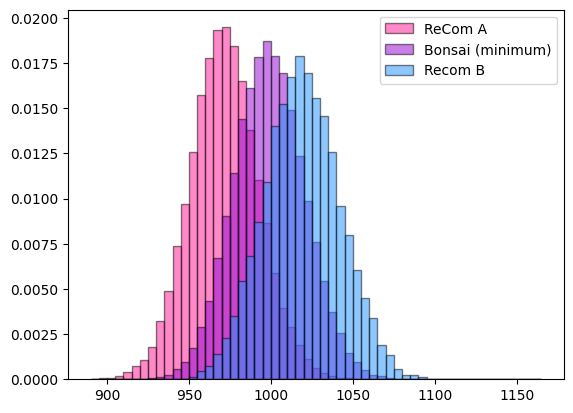}};

\node at (5.5,0) {\includegraphics[width=2in]{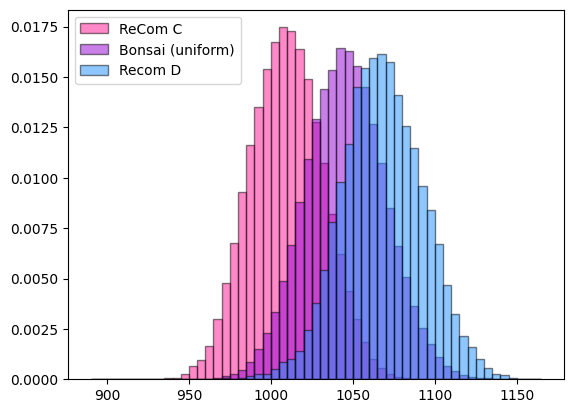}};
\end{tikzpicture}
    \label{F:cut_edges_combined}
\end{figure}

\begin{figure}[hbt!]
    \centering
    \caption{Ensemble statistics for district perimeters for Bonsai vs. ReCom variants for $k$-district plans on a $50 \times 50$ grid, with $k\in\{10, 25, 50\}$}
\begin{tikzpicture}

\node at (0,10.1) {{\footnotesize Minimum spanning trees}};

\node at (5.5,10.1) {{\footnotesize Uniform spanning trees}};

\node at (-3.2,8) {{\scriptsize $k=10$}};

\node at (9,8) {$\ \ \ \ \ \ $};

\node at (0,8) {\includegraphics[width=2in]{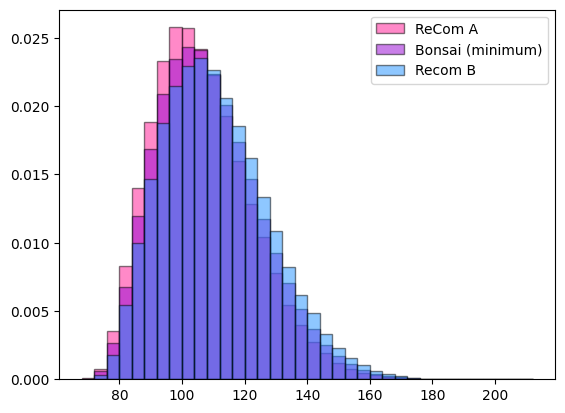}};

\node at (5.5,8) {\includegraphics[width=2in]{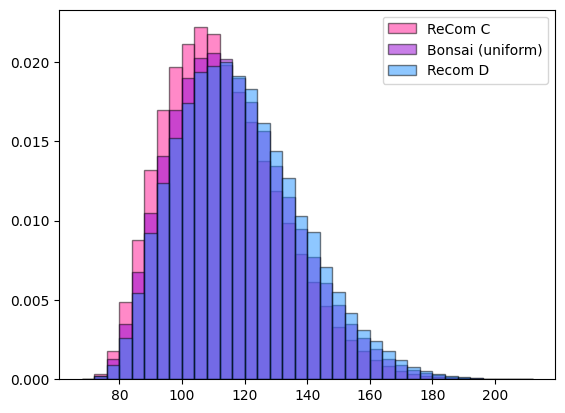}};

\node at (-3.2,4) {{\scriptsize $k=25$}};

\node at (9,4) {$\ \ \ \ \ \ $};

\node at (0,4) {\includegraphics[width=2in]{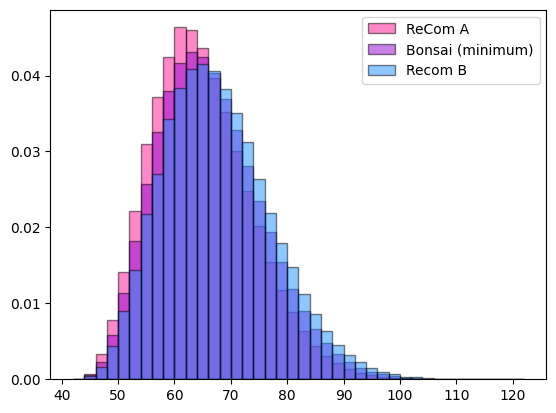}};

\node at (5.5,4) {\includegraphics[width=2in]{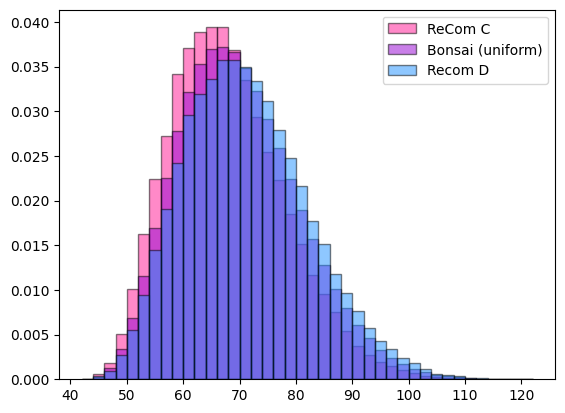}};

\node at (-3.2,0) {{\scriptsize $k=50$}};

\node at (9,0) {$\ \ \ \ \ \ $};

\node at (0,0) {\includegraphics[width=2in]{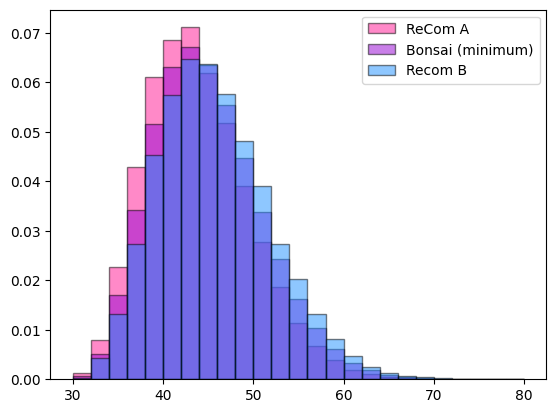}};

\node at (5.5,0) {\includegraphics[width=2in]{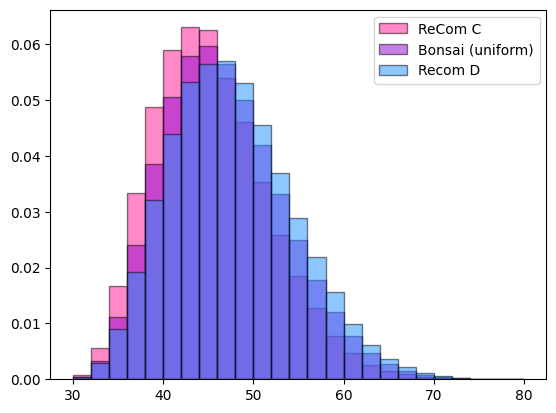}};
\end{tikzpicture}
    \label{F:district_perimeters_combined}
\end{figure}

\FloatBarrier

\subsection{Pennsylvania and North Carolina VTD ensembles}

We next created ensembles of $18$-district plans in Pennsylvania and $99$-district plans in North Carolina, using 2010 VTD shape files and allowing a maximum population deviation of $\epsilon = 0.01$ for Pennsylvania and $\epsilon=0.05$ for North Carolina.  

Figure \ref{F:PA_NC} compares histograms of plan-wide cut edges, just as in Figures~\ref{F:7_cut_edges} and~\ref{F:cut_edges_combined}.  As in the previous cases, we see that for both types of spanning trees, cut edge statistics for Bonsai lie somewhere between those for the two variants of ReCom, and closer to the statistics for the district pair selection variant (ReCom B and ReCom D, respectively).

\begin{figure}[hbt!]
    \centering
    \caption{Ensemble statistics for plan-wide cut edges for Bonsai vs. ReCom variants for PA ($k=18$) and NC ($k=99$)}
\begin{tikzpicture}

\node at (0,10.1) {{\footnotesize Minimum spanning trees}};

\node at (5.5,10.1) {{\footnotesize Uniform spanning trees}};

\node at (-3.2,8.2) {{\scriptsize PA}};

\node at (-3.2,7.9){{\scriptsize $(k=18)$}};

\node at (9,8) {$\ \ \ \ \ \ $};

\node at (0,8) {\includegraphics[width=2in]{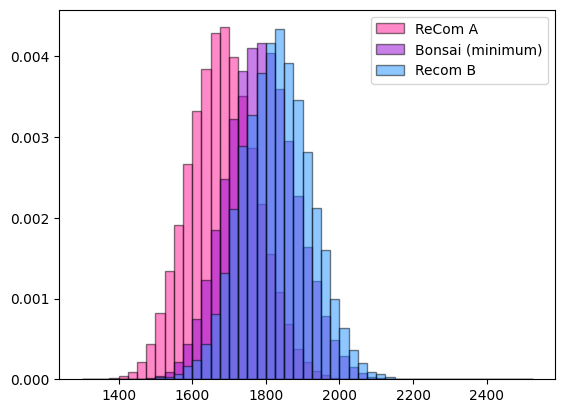}};

\node at (5.5,8) {\includegraphics[width=2in]{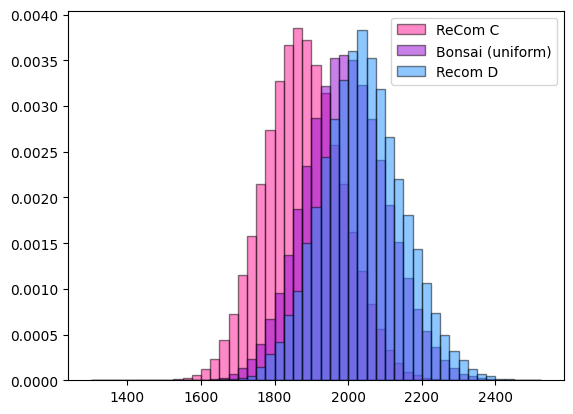}};

\node at (-3.2,4.2) {{\scriptsize NC}};

\node at (-3.2,3.9) {{\scriptsize $(k=99)$}};

\node at (9,4) {$\ \ \ \ \ \ $};

\node at (0,4) {\includegraphics[width=2in]{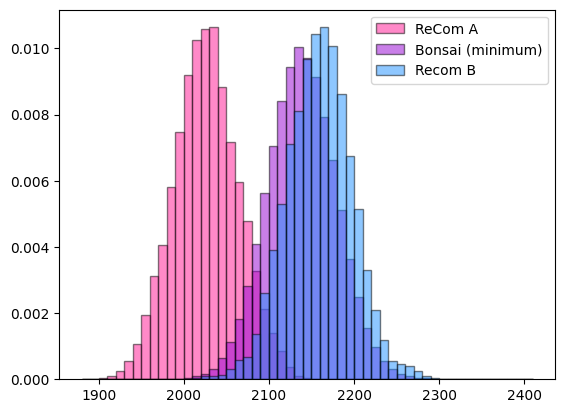}};

\node at (5.5,4) {\includegraphics[width=2in]{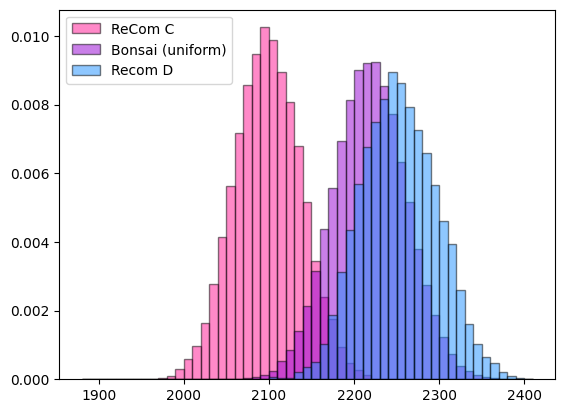}};

\end{tikzpicture}
    \label{F:PA_NC}
\end{figure}

We next consider partisan data.  Figure~\ref{fig:PA-into-18-PRES16} shows boxplots for Democratic vote share by district with respect to the 2016 Presidential election in Pennsylvania.  Figure~\ref{fig:NC-into-99-GOV16-b} is a similar plot for North Carolina, showing the Democratic vote share by district with respect to the 2016 Governor's election, but cropped to show only the middle third of the ordered districts (numbers 34-66).  In all cases (including the omitted top and bottom thirds of the North Carolina plots), the statistics for both versions of Bonsai are extremely close to those for all four versions of ReCom.

\begin{figure}[bht!]
\begin{center}
\caption{Ensemble statistics for Democratic vote share by district in the 2016 Presidential election for Bonsai vs. ReCom variants, (a) using minimum spanning trees, (b) using uniform spanning trees, for $18$-district plans on 2010 Pennsylvania VTDs}
\label{fig:PA-into-18-PRES16}
\subfloat[]
{\includegraphics[width=4.5in]{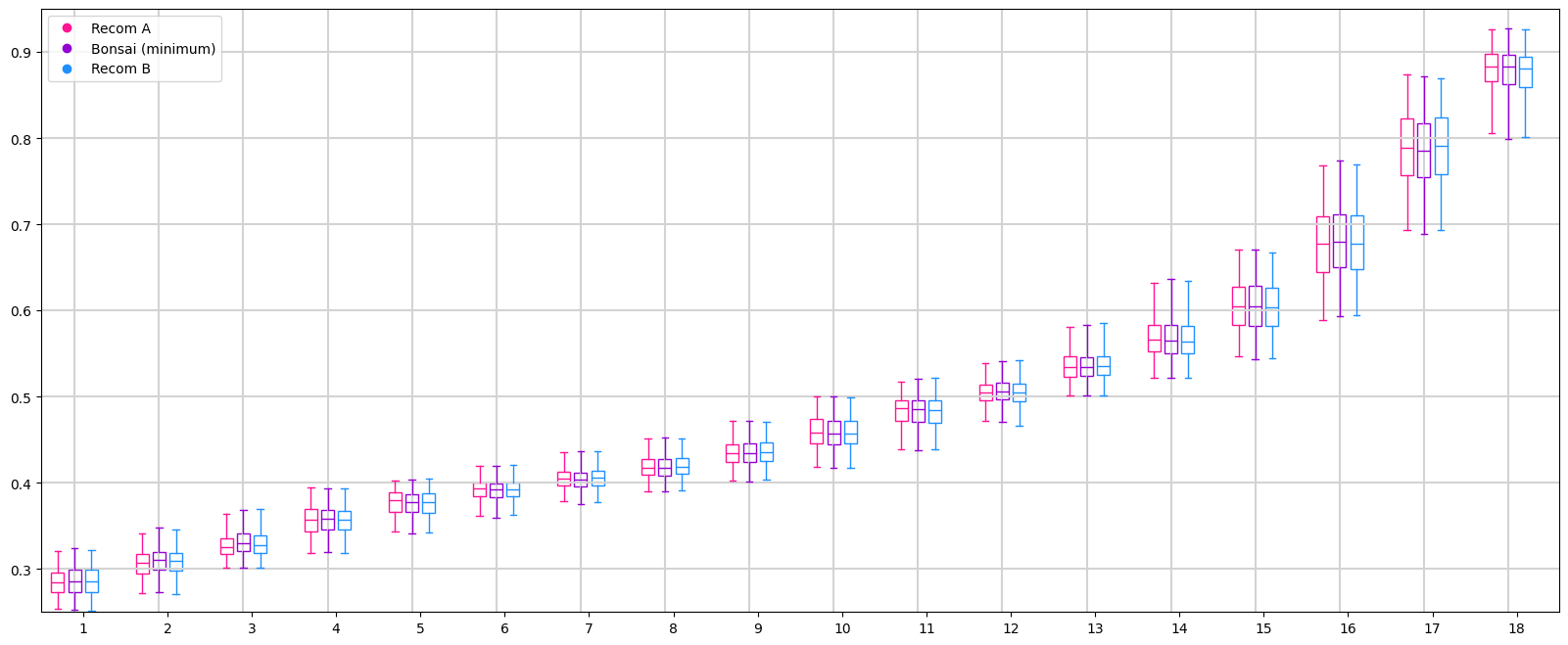} }
\ 
\subfloat[]
{\includegraphics[width=4.5in]{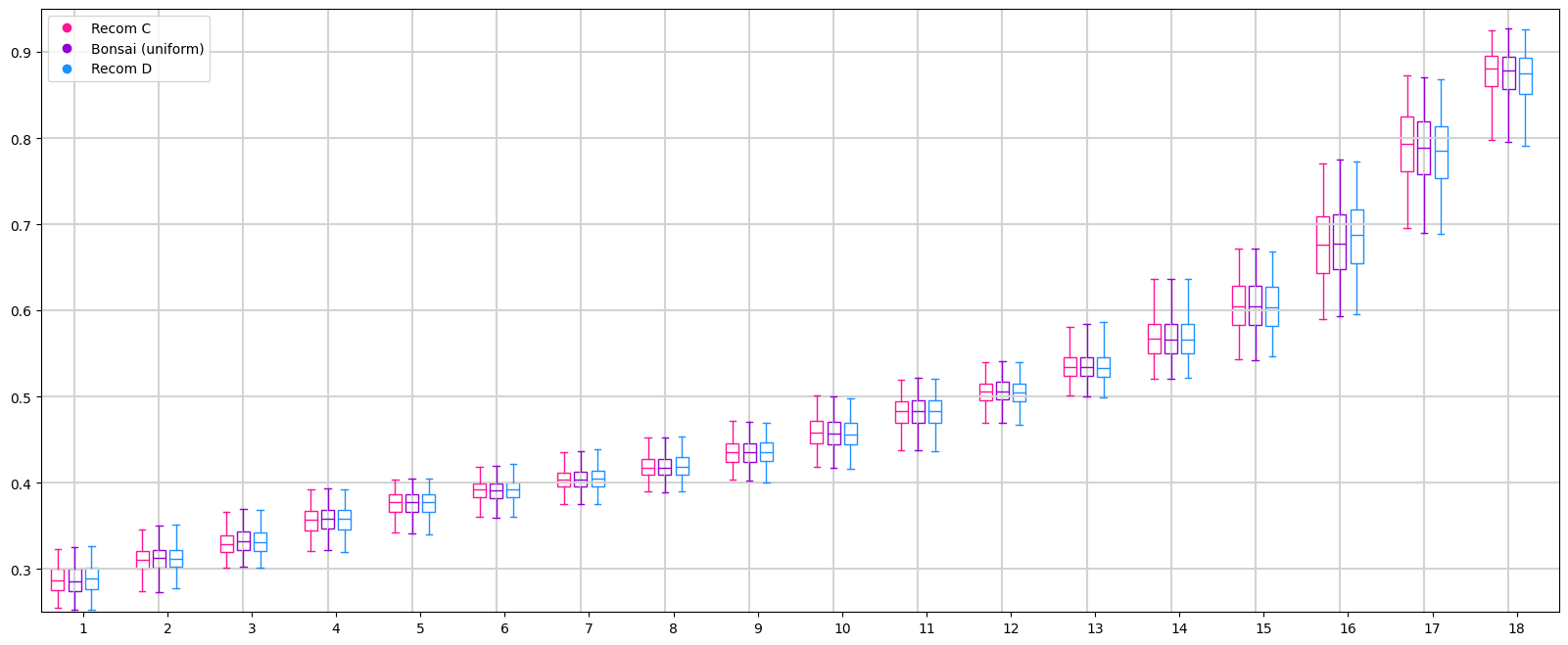}  }
\end{center}
\end{figure}

\begin{figure}[bht!]
\begin{center}
\caption{Ensemble statistics for Democratic vote share by district in the 2016 Governor's election for Bonsai vs. ReCom variants, (a) using minimum spanning trees, (b) using uniform spanning trees, for $99$-district plans on 2010 North Carolina VTDs; this plot shows the middle 33 districts.}
\label{fig:NC-into-99-GOV16-b}
\subfloat[]
{\includegraphics[width=4.5in]{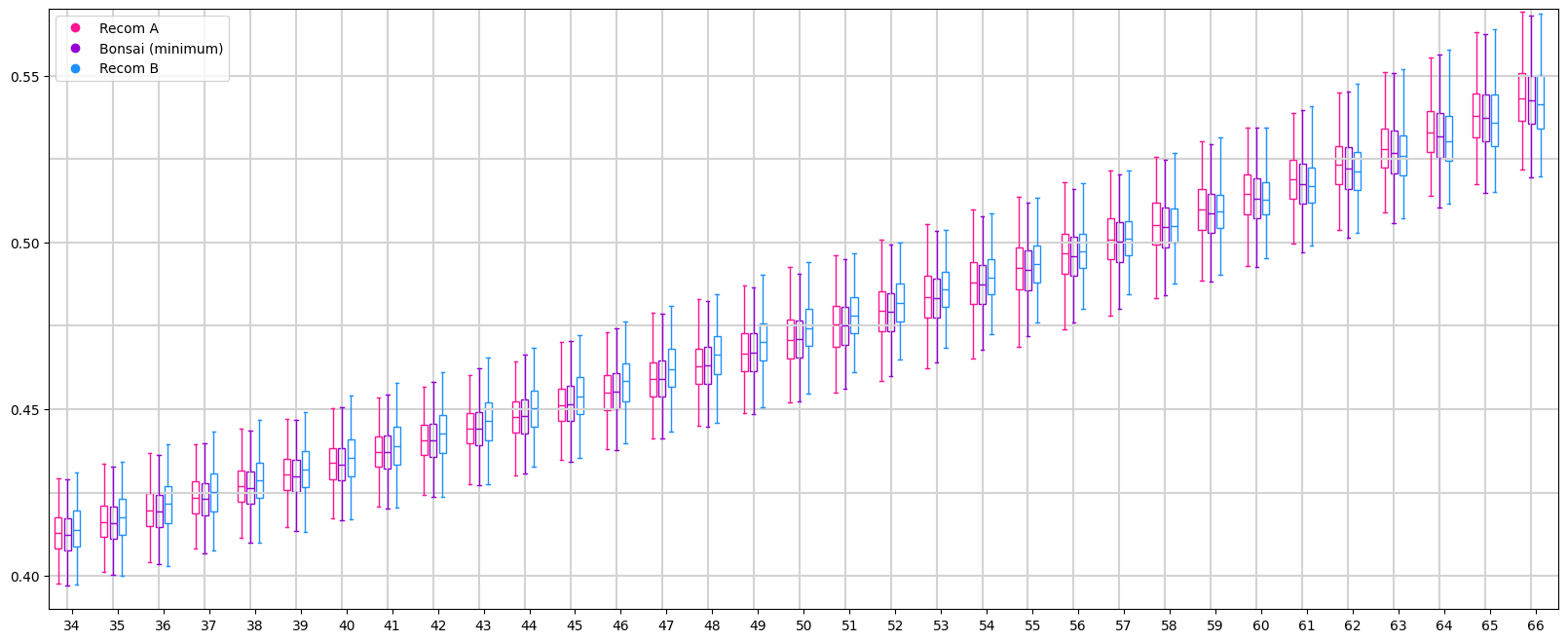} }
\ 
\subfloat[]
{\includegraphics[width=4.5in]{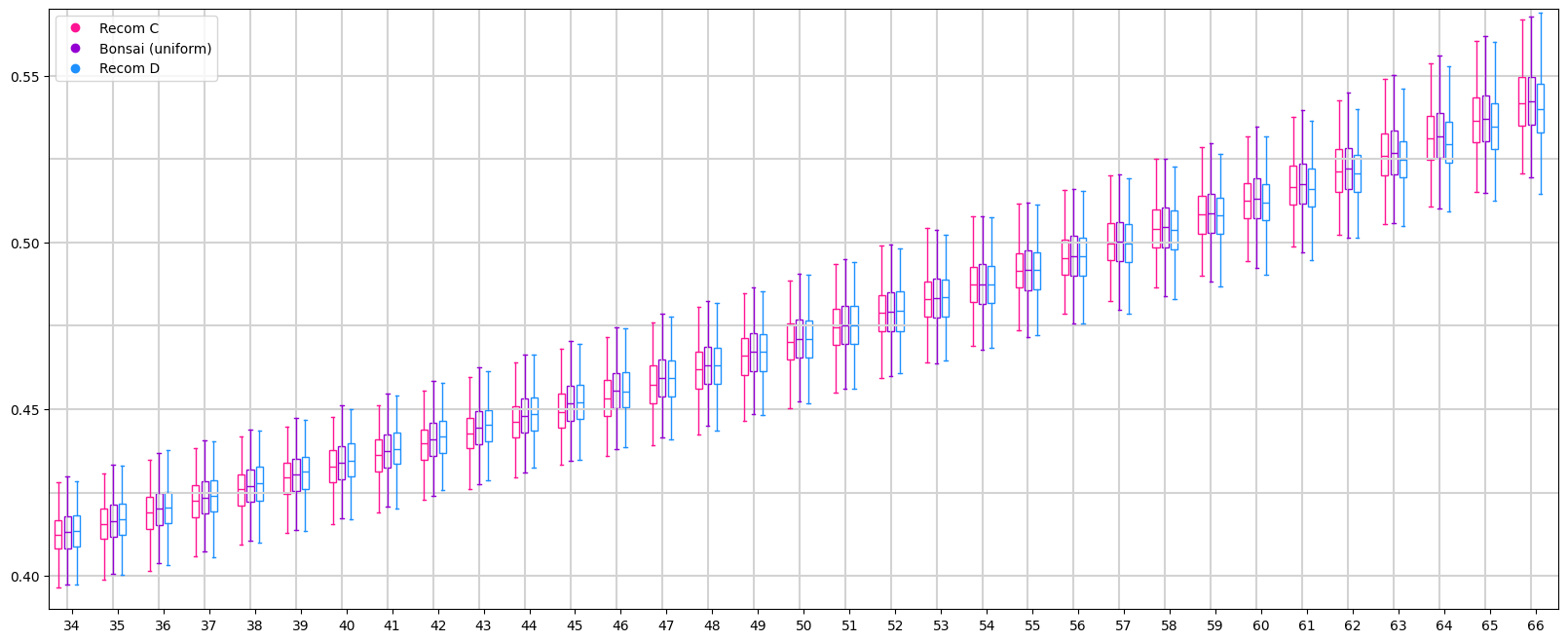}  }
\end{center}
\end{figure}

\FloatBarrier

\section{Conclusion}

We have introduced a class of algorithms for constructing ensembles of district plans via \emph{independent} sampling of connected graph partitions. In contrast to widely used Markov chain approaches such as ReCom, our Bonsai algorithm produces plans one at a time without relying on long-run mixing, thereby avoiding concerns about ergodicity, slow mixing, and autocorrelation. Independent sampling allows for full parallelization and ensures that the effective sample size equals the actual sample size -- an important practical and statistical advantage.

From a theoretical standpoint, we explicitly described Bonsai's sampling distribution in the case of exact population balance.  For imperfect balance, we introduced a flexible framework based on a tolerance multiplier function and a principled rule for selecting among the valid cuts. 

Our empirical results on grid graphs and on VTD graphs for Pennsylvania and North Carolina demonstrate several consistent patterns. Across all scenarios tested, ensemble statistics produced by Bonsai lie between those generated by the two principal ReCom variants (cut-edge selection and district-pair selection), and typically closer to the district-pair selection variant. This behavior is stable across choices of uniform versus minimum spanning trees, across increasing numbers of districts, and across metrics. In particular, Bonsai and ReCom yield very similar district-level vote share distributions, suggesting that Bonsai produces substantively comparable baselines while offering the computational and statistical advantages of independence.

The results also highlight a conceptual point: many ensemble-level statistics appear robust to substantial differences in sampling methodology, at least within the family of spanning-tree–based approaches. This robustness strengthens confidence in ensemble analysis as a tool for evaluating enacted maps, while also underscoring the value of having multiple algorithmic paradigms available for cross-validation.

Overall, Bonsai provides a practical and mathematically transparent framework for independent sampling of graph partitions, offering an effective alternative to Markov chain–based redistricting algorithms and expanding the toolkit available for ensemble analysis in research and litigation contexts.

\bibliographystyle{alpha}  
\bibliography{Bonsai-bibliography} 
\appendix
\section{The distribution from which Algorithm~\ref{alg:perfect-balance-1} samples}\label{S:A1}
In this appendix, we derive an explicit formula for the distribution from which Algorithm~\ref{alg:perfect-balance-1} samples.

For this, let $\mP = \{D_1, ..., D_K\}$ be a district plan.  We wish to compute the probability that the algorithm described in the previous section results in $\mP$.

For this, we will need to sum over the different \emph{splitting orders} that result in $\mP$.  For example, when $K=8$, figure~\ref{F:partition_sequence} shows one potential splitting order, $\SO$,  that splits $\G$ into the districts $\{D_1,...,D_8\}$ in the following order:
\begin{align}\label{E:splitting_order}
\{1,...,8\} 
& \rightarrow \{{\color{red}\{}1,...,7{\color{red}\}},{\color{red}\{}8{\color{red}\}}\} \\
& \rightarrow \{{\color{red}\{} {\color{orange}\{}1,2{\color{orange}\}},{\color{orange}\{}3,4,7{\color{orange}\}},{\color{orange}\{}5,6{\color{orange}\}}{\color{red}\}}, {\color{red}\{}8{\color{red}\}}\} \notag\\ 
& \rightarrow  \{{\color{red}\{} {\color{orange}\{}1,2{\color{orange}\}},{\color{orange}\{}
{\color{green}\{}3,4{\color{green}\}},{\color{green}\{}7{\color{green}\}}{\color{orange}\}},{\color{orange}\{}\{5\},\{6\}{\color{orange}\}}{\color{red}\}},
{\color{red}\{}8{\color{red}\}}\}\notag\\ 
& \rightarrow \{{\color{red}\{}{\color{orange}\{}\{1\},\{2\}{\color{orange}\}},{\color{orange}\{}
{\color{green}\{}\{3\},\{4\}{\color{green}\}},{\color{green}\{}7{\color{green}\}}{\color{orange}\}},{\color{orange}\{}\{5\},\{6\}{\color{orange}\}}{\color{red}\}}, {\color{red}\{}8{\color{red}\}}\}. \notag
\end{align}

\begin{figure}[bht!]\centering
\includegraphics[width=1in]{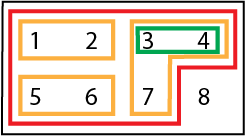}
\caption{A splitting order $\SO$ of $\{1,...,8\}$}\label{F:partition_sequence}
\end{figure}

More precisely, a \emph{splitting order} $\SO$ of $\mP$ is a sequence of successive partitions (refinements) of $\{1,...,K\}$ down to singletons.  Each step of a splitting order (each arrow in Equation~\ref{E:splitting_order}) involves partitioning (refining) some of the existing sets.  An individual such refinement, $A_1\cup\cdots\cup A_l \rightarrow \{A_1,...,A_l\}=A$, will be called a  
\emph{split} of $\SO$, and the individual pieces $A_1,...,A_l$ of all of the splits of $\SO$ will be called the \emph{collections} of $\SO$.  In the example, ${\color{red}\{}1,...,7{\color{red}\}} \rightarrow {\color{red}\{}{\color{orange}\{}1,2{\color{orange}\}},{\color{orange}\{}3,4,7{\color{orange}\}},{\color{orange}\{}5,6{\color{orange}\}}{\color{red}\}} = {\color{red}\{}A_1,A_2,A_3{\color{red}\}} = A$ is one split of $\SO$, while the collections of $\SO$ are: 
$$\{1,...,7\}, \{1,2\}, \{3,4,7\}, \{5,6\}, \{3,4\}, \{1\}, \{2\},\{3\},\{4\},\{5\},\{6\},\{7\},\{8\}. $$
For each collection $S=\{i_1,...,i_z\}$ of $\SO$, the corresponding graph $D_{i_1}\cup\cdots\cup D_{i_z}$ must be connected.

Since we intend to sum over the splitting orders of a plan $\mP$, we must clarify that two splitting orders are considered the same if they only differ regarding the step at which \emph{independent} refinements occur.  So in the previous example above, $\SO$ would be unchanged if the split 
${\color{orange}\{}\{1,2\}{\color{orange}\}} \rightarrow {\color{orange}\{}\{1\},\{2\}{\color{orange}\}}$ had occurred at step 3 instead of step 4, and similarly for the split ${\color{orange}\{}\{5,6\}{\color{orange}\}} \rightarrow {\color{orange}\{}\{5\},\{6\}{\color{orange}\}}$

If $A = \{A_1,...,A_l\}$ is a split of $\SO$, then we denote by 
$$\C(A) =  \C(A_1|  A_2| \cdots |  A_l)$$ 
the number of spanning trees on the quotient multi-graph corresponding to $A$.  This means the graph whose nodes are $\{1,...,l\}$, and the number of edges between a pair $x,y$ of distinct nodes equals
$$\left|\left\{(a,b)\in E(\G)\mid a\in D_i, b\in D_j, i\in A_x, j\in A_y \right\} \right|.$$
Note that this quotient multigraph is connected by hypothesis.

For $i,j\in\{1,...,K\}$ with $i\neq j$, let $e_{ij}$ denote the number of edges of $\G$ connecting $D_i$ to $D_j$, which equals zero if the district pair is not adjacent.  It is possible to express $\C(A)$ fully in terms of these quantities.  For example, $\C(\{i\}|\{j\}) = e_{ij}$,
while $\C(\{i\}|\{j\}|\{k\}) = e_{ij}e_{ik} + e_{ij}e_{jk} + e_{ik}e_{jk}$.

If $S=\{i_1,...,i_z\}$ is a collection from $\SO$ of size $z>1$, we define $\Omega(S)$ as the odds that a UST, $T$, of $D_{i_1}\cup\cdots\cup D_{i_z}$ is un-splittable (which means that none of its edges have a label within $\epsilon$ of any integer multiple of $\I$).  Formally,
$$\Omega(S) = \frac{\ST_0(S)}{\ST_+(S)},$$
where $\ST_0(S)$, $\ST_+(S)$ respectively denote the number of un-splittable and splittable spanning trees of $D_{i_1}\cup\cdots\cup D_{i_z}$.  Note that $ST_+(S)>0$  because $\SO$ is a valid splitting sequence of $\{1,...,K\}$ down to singletons.  We will additionally express the number of total spanning trees as $\ST(S)=\ST_0(S)+\ST_+(S)$.

\begin{proposition}\label{P:P2} The probability that the algorithm produces $\mP$ equals 
\begin{align*}
    \text{Prob}(\mP) = \frac{\prod_{i=1}^k ST(\{i\})}{ST_+(\{1,...,K\})}\cdot 
    \sum_{\SO}\left(\prod_A \C(A)\right)\left(\prod_S\Omega(S) \right),
\end{align*}
where the sum is over all splitting orders $\SO$ of $\mP$, the first product is over all splits $A$ of $\SO$, and the second product is over all collections $S$ of $\SO$.
\end{proposition}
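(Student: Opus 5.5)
The plan is to follow the recursion of Algorithm~\ref{alg:perfect-balance-1} and compute, for each recursive call, the probability that it produces one particular split. The resulting product then telescopes. First note that a run of the algorithm that outputs $\mP$ determines a unique splitting order $\SO$ of $\mP$: the collections are the vertex sets passed to the recursive calls, and the splits are the partitions produced by cutting all marked edges. Distinct splitting orders (in the sense of the identification made above, which ignores the timing of independent refinements) therefore correspond to disjoint events. So $\text{Prob}(\mP)=\sum_{\SO}\text{Prob}(\text{the run follows }\SO)$. The recursive calls on different pieces use independent randomness, so $\text{Prob}(\text{run follows }\SO)$ is the product, over all splits $S\to A=\{A_1,\dots,A_l\}$ of $\SO$, of the probability that the call $\GenPlan$ on $S$ returns exactly the pieces $A_1,\dots,A_l$.

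\textbf{One call.} Inside $\GenPlan(S)$, the loop draws uniform spanning trees until it gets a splittable one, meaning one with a marked edge. The accepted tree is therefore uniform on the $\ST_+(S)$ splittable spanning trees. So the probability of producing split $A$ is $N(A)/\ST_+(S)$, where $N(A)$ is the number of spanning trees $T$ of $D_S=\bigcup_{i\in S}D_i$ whose set of valid cut edges, once removed, leaves exactly the pieces $D_{A_1},\dots,D_{A_l}$. The key step is the count
\[
N(A)=\C(A)\prod_{j=1}^{l}\ST_0(A_j),
\]
with the convention $\ST_0(\{i\})=\ST(\{i\})$. To prove it, I would take a tree $T$ that induces $A$ and decompose it, as in the proof of the first Proposition, into:
\begin{itemize}
\item a spanning tree $T_j$ of each $D_{A_j}$;
\item a set of $l-1$ connecting edges, which is equivalent to a spanning tree of the quotient multigraph for $A$; this is where the factor $\C(A)$ comes from.
\end{itemize}
It then remains to characterize which trees $T$ built this way induce exactly $A$. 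There are two observations, both using that every $D_{A_j}$ has population a multiple of $\I$.
\begin{itemize}
\item Every connecting edge of $T$ is a valid cut edge. Removing it separates $T$ into two unions of whole pieces, and each union has population a multiple of $\I$.
\item An edge $e$ inside $T_j$ is valid for $T$ if and only if it is valid for $T_j$ as a spanning tree of $D_{A_j}$. Removing $e$ from $T$ gives one side equal to a side of $T_j-e$ together with some whole pieces, and those whole pieces contribute a multiple of $\I$.
\end{itemize}
Hence $T$ induces exactly $A$ if and only if each $T_j$ is un-splittable. For a singleton $\{i\}$, every spanning tree is un-splittable, because all node populations are positive and so any edge splits $D_i$ into two parts with populations strictly between $0$ and $\I$. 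This justifies the convention $\ST_0(\{i\})=\ST(\{i\})$.

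\textbf{Telescoping.} Multiplying over all splits of $\SO$ gives
\[
\prod_{A}\C(A)\cdot\frac{\prod_{\text{collections }S}\ST_0(S)}{\prod_{\text{parents }S}\ST_+(S)}.
\]
The parents are the root $\{1,\dots,K\}$ together with every non-singleton collection. Every collection is a child of exactly one split, and every non-singleton collection is the parent of exactly one split. For each non-singleton collection $S$, pair its $\ST_0(S)$ in the numerator with its $\ST_+(S)$ in the denominator; this gives $\Omega(S)$. What remains is $\ST(\{i\})$ for each of the $K$ singleton collections, and $1/\ST_+(\{1,\dots,K\})$ from the root. Singletons have $\Omega(\{i\})$ undefined, so the product $\prod_S\Omega(S)$ in the statement should be read as running over collections of size $z>1$, matching the definition of $\Omega$. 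Summing over $\SO$ then gives the stated formula.

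\textbf{Main obstacle.} I expect the hardest step to be the exact characterization of $N(A)$. It must show both that all connecting edges are automatically marked and that no internal edge is marked, so that the correspondence between trees and triples (a tree on each piece, a spanning tree of the quotient, un-splittability of each piece's tree) is a genuine bijection. A secondary point is that the assignment of runs to splitting orders is well defined, so the events summed over are disjoint and the identification of splitting orders up to reordering of independent refinements is handled correctly.
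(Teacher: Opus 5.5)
Your proposal is correct and follows the same route as the paper: the probability of each split is $\C(A)\prod_j \ST_0(A_j)/\ST_+(S)$, with singletons contributing $\ST(\{i\})$. These factors multiply along a splitting order, the $\ST_0/\ST_+$ ratios telescope into $\prod_S \Omega(S)$, and summing over splitting orders gives the formula. Your argument is more complete than the paper's worked example: you justify the count $N(A)$ (connecting edges are always marked, and an internal edge is marked exactly when it is valid for its piece's own tree), and you correctly note that $\prod_S\Omega(S)$ must run over non-singleton collections only.
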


The proof just involves simple algebra, which is simplest to illustrate by an example.
\begin{example}
When $D=4$, one possible splitting order, $\SO$, is
$$\{1,2,3,4\} 
\stackrel{P_1}{\rightarrow} \{ {\color{red}\{}1,2,3{\color{red}\}},{\color{red}\{}4{\color{red}\}}\} 
\stackrel{P_2}{\rightarrow} \{ {\color{red}\{}{\color{orange}\{}1,2{\color{orange}\}},{\color{orange}\{}3{\color{orange}\}}{\color{red}\}},{\color{red}\{}4{\color{red}\}}\} 
\stackrel{P_3}{\rightarrow} \{ {\color{red}\{}{\color{orange}\{}{\color{green}\{}1{\color{green}\}},{\color{green}\{}2{\color{green}\}}{\color{orange}\}},{\color{orange}\{}3{\color{orange}\}}{\color{red}\}},{\color{red}\{}4{\color{red}\}}\}. 
$$
Assume that $\mP$ is a plan for which $\SO$ is valid.  The probability that the algorithm produces $\mP$ in the order $\SO$ is: $$\text{Prob}(\mP\text{ and } \SO) =P_1P_2P_3,$$ where $P_i$ is the probability of the corresponding split, as labeled in the previous equation.  These probabilities are computed as follows.
\begin{align}
    P_1 & =\frac{\ST_0(\{1,2,3\})\cdot \ST(\{4\})\cdot\C(\{1,2,3\}| \{4\}) } {\ST_+(\{1,2,3,4\})} \\
    P_2 & = \frac{\ST_0(\{1,2\})\cdot\ST(\{3\})\cdot\C(\{1,2\}| \{3\})}{\ST_+(\{1,2,3\})} \\
    P_3 & = \frac{\ST(\{1\})\cdot \ST(\{2\})\cdot\C(\{1\}| \{2\})}{\ST_+(\{1,2\})}.
\end{align}
Therefore,
\begin{align*}
   \text{Prob}(\mP\text{ and } \SO)
    = \frac{\prod_{i=1}^4 ST(\{i\})}{ST_+(\{1,...,4\})}
    &\cdot \C(\{1,2,3\}|  \{4\})\cdot \C(\{1,2\}| \{3\})\cdot\C(\{1\}|  \{2\}) \\
    &\cdot\Omega(\{1,2,3\})\cdot \Omega(\{1,2\}) \\
    = \frac{\prod_{i=1}^4 ST(\{i\})}{ST_+(\{1,...,4\})}
    &\cdot \left(\prod_A \C(A)\right)\left(\prod_S\Omega(S) \right)
\end{align*}
The other splitting orders work similarly, so $\text{Prob}(\mP)$ can be found by summing $\text{Prob}(\mP\text{ and } \SO)$ over all of them, which yields the equation if Proposition~\ref{P:P2}.

As previously mentioned, $\prod_A \C(A)$ can be expressed in terms of the values $\{e_{ij}\}$.  For example in the split considered above, we have
\begin{align*} 
\C(\{1,2,3\}| \{4\})\cdot \C(\{1,2\}| \{3\})\cdot\C(\{1\}|  \{2\}) =
\left(e_{14}+e_{24}+e_{34}\right)\left(e_{13} + e_{23}\right)\left(e_{12}\right).
\end{align*}
\end{example}

\end{document}